\documentclass[10pt,aps,prl,twocolumn,superscriptaddress]{revtex4-2}
\usepackage{newtxtext}
\usepackage{graphicx,graphics}
\usepackage{dcolumn}
\usepackage{amsmath,amssymb,amsfonts}
\usepackage{latexsym,verbatim}
\usepackage{bm,dsfont}
\usepackage{xcolor}
\usepackage{multirow}
\usepackage{array}
\usepackage{nicefrac}
\usepackage{float}
\usepackage[braket, qm]{qcircuit}

\usepackage{tikz}
\usetikzlibrary{calc,backgrounds}
\usetikzlibrary{decorations.pathreplacing}
\usetikzlibrary{decorations.pathreplacing, calligraphy}
\tikzset{
  thickbrace/.style={
    decorate, 
    decoration={calligraphic brace, amplitude=5pt}, 
    very thick 
  }
}

\usepackage{physics}

\newcommand{\su}{\mathfrak{su}}

\newcommand{\gens}{\mathcal{G}}
\newcommand{\graphG}{\mathtt{G}}
\newcommand{\graphH}{\mathtt{H}}
\newcommand{\graphJ}{\mathtt{J}}
\newcommand{\graphQ}{\mathtt{Q}}

\newcommand{\ad}{\mathrm{ad}}

\newcommand{\id}{I}

\newcommand{\Aut}{\mathrm{Aut}}

\newtheorem{conjecture}{Conjecture}
\newtheorem{theorem}{Result}
\newtheorem{fact}{Fact}

\usepackage[dvipsnames,svgnames]{xcolor}

\usepackage[breaklinks=true,colorlinks,citecolor=orange,linkcolor=orange,urlcolor=orange]{hyperref}

\begin{document}

\title{
Obstructions to universality in globally controlled qubit graphs
}

\author{Roberto Gargiulo}
\email{r.gargiulo@fz-juelich.de}
\affiliation{Forschungszentrum Jülich GmbH, Peter Grünberg Institute, Quantum Control (PGI-8),
  52425 Jülich,
  Germany}
\affiliation{%
  University of Cologne, Institute for Theoretical Physics (THP),
  50937 Köln,
  Germany
}

\author{Roberto Menta}
\email{roberto.menta@sns.it}
\affiliation{NEST, Scuola Normale Superiore, I-56127 Pisa, Italy}

\author{Vittorio Giovannetti}
\email{vittorio.giovannetti@sns.it}
\affiliation{NEST, Scuola Normale Superiore, I-56127 Pisa, Italy}

\author{Robert Zeier}
\email{r.zeier@fz-juelich.de}
\affiliation{Forschungszentrum Jülich GmbH, Peter Grünberg Institute, Quantum Control (PGI-8),
  52425 Jülich,
  Germany}

\begin{abstract}
Global control offers a promising route to scalable quantum computing. A recent conjecture by Hu~{\it et~al.}~[\href{https://doi.org/10.48550/arXiv.2508.19075}{arXiv:2508.19075}] proposes that any connected qubit graph equipped with global Ising-type interactions and tunable global transverse fields achieves universality if and only if an additional control field breaks every non-trivial automorphism of the
underlying graph. We disprove this conjecture by exhibiting explicit seven- and nine-qubit counterexamples: connected graphs with trivial automorphism group for which the generated Lie algebra is nonetheless not universal. Our analysis reveals that graph automorphisms capture only part of the Hamiltonian symmetry structure: there exist \textit{hidden} symmetries beyond the automorphism group of the graph. Additionally, in the case of non-trivial automorphism group, we find control terms which break the graph symmetries but are still not universal. These findings sharpen the characterization of universality for globally controlled quantum systems.
\end{abstract}

\maketitle

\textit{\bfseries Introduction.}~Globally controlled quantum systems---where all qubits are driven by the same time-dependent field---provide an experimentally appealing but theoretically constrained model of quantum computation~\cite{Lloyd_1993,benjamin_2000, benjamin_2001, benjamin_2003, Fitzsimons2006, Paz-Silva2009, LloydQAOA,morales2020universality}. Their interest lies in avoiding full qubit addressability, replacing it with uniform control fields and fixed interactions, thereby reducing hardware complexity and control overhead~\cite{cesa2023, menta2024globally, cioni2024conveyorbelt, menta2025building, Mitarai2026, riccardi2026}. At the same time, such architectures naturally interpolate between digital and analog paradigms: discrete gate synthesis must emerge from continuous, globally applied dynamics, making controllability both practically relevant and theoretically subtle~\cite{aiudi2026, Bondar2025, cesa2026engineering, menta2026global, White2026, Calliari2026}. A central question is when such restricted controls generate the full Lie algebra $\mathfrak{su}(2^n)$, i.e., achieve universality.

Universality is naturally formulated in terms of the Lie algebra generated by the available Hamiltonians~\cite{Barenco_elementary_1995,Ramakrishna_1995,Lloyd_universality_1996,jurdjevic1972controllability,dalessandro2022}.
Symmetry methods are a central tool for analyzing universality, providing
a necessary and sufficient criterion \cite{Zeier_Schulte-Herbrueggen_2011}.
This symmetry-based perspective has been extended to decide whether prescribed effective
interactions can be simulated by a given set of Hamiltonians~\cite{zeier2015squares,zimboras2015symmetry}.
These methods have been applied to study fermionic systems~\cite{ZZKS14},
variational quantum algorithms~\cite{Kazi_Larocca_Farinati_Coles_Cerezo_Zeier_2025},
and the ground-state reachability in
variational quantum eigensolvers~\cite{singh2025groundstate}.

For Ising-type interactions with global transverse fields, Hu {\it et al.}~\cite{hu2025universal} recently conjectured that universality holds if and only if all non-trivial automorphisms of the interaction graph are broken by suitable control terms.
We show that this criterion is not sufficient for universality. We construct explicit connected seven- and nine-qubit graphs with trivial automorphism group for which the generated dynamical Lie algebra is strictly non universal.
The obstruction arises from symmetry constraints beyond graph automorphisms: an additional symmetry which cannot be represented as a permutation \cite{Kazi_Larocca_Farinati_Coles_Cerezo_Zeier_2025}.

We also consider another example with non-trivial graph automorphisms. There, we can provide controls which break the graph symmetries, 
while still preserving a linear combinations of these.
This demonstrates that breaking graph-theoretic symmetries alone does not characterize universality in globally controlled Ising systems.

We consider $n$ qubits placed on the vertices $V$ of a connected graph
$\graphG=(V,E)$, where the control set is given by simultaneous single-qubit controls combined with an Ising-type interaction on the edges $E$:
\begin{subequations}
\label{eq:generators}
\begin{align}
  \gens_{\graphG} \;=\;
  \Bigl\{\,
    &H_X = \sum_{j\in V} X_j,\quad H_{ZZ} = \sum_{\{i,j\}\in E} Z_i Z_j,
     \label{eq:qaoa} \\ 
    & H_Z = \sum_{j\in V} Z_j
  \,\Bigr\}.
\end{align}
\end{subequations}
The conjecture of Ref.~\cite{hu2025universal} states:

\begin{conjecture}[\cite{hu2025universal}]\label{conj:hu}
Any connected graph $\graphG=(V,E)$ supplemented by global control fields \eqref{eq:generators} realizes universal quantum computation if and only if there exists at least one additional control Hamiltonian $H_{\rm break}$ that breaks all non-trivial automorphisms of the control graph, rendering its automorphism group trivial.
\end{conjecture}

Evidently, in order to guarantee universality, the controls should possess no symmetry, which proves the necessity of the claim.
Then, the ``if'' direction is the non-trivial claim: trivial automorphism group
should {\it guarantee} universality.

Before delving into the details, recall that a symmetry is a matrix $S$
commuting with all Hamiltonian generators given by a set $\gens$ (as in Eq.~\eqref{eq:generators}),
i.e.,
\begin{equation}\label{eq:symmetry}
 \comm{H}{S} = 0 \text{ for all } H \in \gens.
\end{equation}
In particular, all graph automorphisms $P_\sigma$ given by the operator
representation  of qubit permutations $\sigma$ 
are symmetries.
We precisely define the meaning of breaking all non-trivial automorphisms of a graph $\graphG$.
We say that $H_{\rm break}$ \textit{breaks} all symmetries coming from the automorphism group $\Aut(\graphG)$ of $\graphG$ if and only if there is {\it no} qubit
permutation $P_\sigma$ in $\Aut(\graphG)$ which commutes with the additional control $H_{\rm break}$, i.e.,
\begin{equation}\label{eq:graph_symmetry_breaking}
    \comm{P_\sigma}{H_{\rm break}} \neq 0 \text{ for all } \sigma\in\Aut(\graphG) \setminus\{\id\}.
\end{equation}
As an immediate consequence, the only subgroup of $\Aut(\graphG)$ which commutes with $H_{\rm break}$ is the trivial subgroup containing only the identity.
Hence, the appearance of $H_{\rm break}$ renders $\Aut(\graphG)$ \textit{trivial}.
In both examples discussed in Ref.~\cite{hu2025universal}, $\Aut(\graphG)$ consists of the identity and a global reflection symmetry $R$. 
Hence, in that setting $\comm{R}{H_{\rm break}}\neq 0$ is equivalent to breaking all graph symmetries.

We also highlight that a Hamiltonian $H_{\rm break}$ as in Eq.~\eqref{eq:graph_symmetry_breaking} \textit{always} exists for any $\Aut(\graphG)$, however the conjecture states that \textit{any} such symmetry breaking terms suffices to reach universality.
We disprove this direction in the next paragraph, by providing explicit counterexamples.

\textit{\bfseries The counterexamples.}~Reference~\cite{hu2025universal} considers path graphs connecting nearest neighbors with Hamiltonians
as in Eq.~\eqref{eq:generators} and observes that all symmetries in this case are comprised of graph automorphisms, i.e.~the reflection symmetry.
But this does not holds for all graphs and one such class of examples is given by Fig.~\ref{fig:counterexample}
which contains graphs with symmetries but without non-trivial automorphisms.
For the graph $\graphH$ in Fig.~\ref{fig:counterexample}, we confirm with direct
computations in MAGMA~\cite{MR1484478} that it has a trivial automorphism group
but non-trivial Hamiltonian symmetries. In this case,  
all Hamiltonians can be considered as breaking and we first
focus on examples without non-trivial automorphisms:


\begin{theorem}\label{thm:main}
There exists a connected graph $\graphG$ such that (a)
$\graphG$ has no automorphisms besides the identity
and (b) the Lie algebra generated by $\gens_{\graphG}$ is \textit{not} universal.
In particular, Conjecture~\ref{conj:hu} is false.
\end{theorem}

\begin{figure}[t]
    \centering
    \includegraphics[width=0.7\columnwidth]{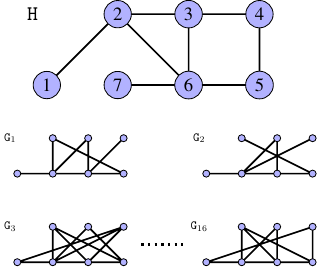}
    \vspace{0.2cm}
    \caption{Examples of the 16 seven-vertex graphs with trivial automorphism group and non-trivial Hamiltonian symmetries over seven vertices. $\graphH$ is the specific example in Result~\ref{thm:main}.}
    \label{fig:counterexample}
\end{figure}

The proof behind Result~\ref{thm:main} relies on finding a specific example, hence we prove it \textit{by construction}.
The graph $\graphH$ shown in Fig.~\ref{fig:counterexample} satisfies the assumptions of Result~\ref{thm:main}. 
In particular, it violates universality by having additional \textit{hidden symmetries}, which do not come from the graph.
Another way of understanding how Result~\ref{thm:main}, in the case of $\graphH$, obstructs universality is in terms of \textit{invariant subspaces}.
Namely, the hidden symmetry structure constrains time evolution inside a given invariant subspace.
In the case of $\graphH$, we find computationally that $\gens_{\graphH}$ respects an invariant subspace decomposition into two invariant subspaces of dimension $2$ and $126$ (see Fig.~\ref{fig:hilbert_space_decomposition}(a)).
Surprisingly, we also found that such a decomposition (in terms of dimensions) also holds for all other $16$ examples on seven vertices from Fig.~\ref{fig:counterexample}.

{\it Proof of Result~\ref{thm:main}.}
Assume first that (a) and (b) are true. Then, since, the automorphism group is already trivial, any $H_{\rm break}$ suffices to break the automorphism group in the sense of Eq.~\eqref{eq:graph_symmetry_breaking}.
As such, to discuss lack of universality it suffices to look at the Lie algebra generated by $\gens_{\graphH}$ itself. 
Here we consider an explicit example which satisfies (a) and such that it has an additional non-trivial Hamiltonian symmetry, hence cannot be universal.
Consider now $\graphH$ as in Fig.~\ref{fig:counterexample}.
To confirm (a) and (b), we use explicit algebraic computations.

(a)~\textit{Trivial automorphism group.}
We enumerate $\Aut(\graphH)$ and confirm $|\Aut(\graphH)|=1$. This can be accomplished using the \textit{nauty} algorithm~\cite{McKay2014}
 as used in MAGMA~\cite{MR1484478}.

(b)~\textit{Non-Universality via Non-trivial Hamiltonian Symmetries.}
To compute the Hamiltonian symmetries for any set of generators $\gens$ it suffices to solve a set of linear homogeneous equations:
\begin{align*}
   & \comm{H_j}{S} = 0 \;\text{ for all }\; H_j\in\gens \iff \\ & (\id {\otimes} H_j - H_j^{t} {\otimes} \id)\; \mathrm{vec}(S) = 0 \;\text{ for all }\;  H_j\in\gens,
\end{align*}
where $\mathrm{vec}(S)$ denotes the vectorization of the symmetry $S$.
Since the matrix elements of the Hamiltonians in $\gens_\graphG$ are integers (hence rational), we can solve this equation exactly.
We find a non-trivial solution, which we can write as
\begin{align*}
        S &= -(\Pi_{13} + \Pi_{27} + \Pi_{46})
        + \Pi_{13}(\Pi_{27} + \Pi_{46})\\
        & \phantom{=\hspace{1mm}} - \Pi_{13}\Pi_{27}\Pi_{46}
\end{align*}
where $\Pi_{ij} = 2\;{\rm SWAP}_{ij}-1$ is (proportional to) the projector onto the singlet state $\ket{00}_{ij}+
\ket{11}_{ij}$, where $\ket{00}_{ij}$ is the product state which is $\ket{0}$ at sites $i$ and $j$.
This concludes the proof.
\hfill$\square$

\begin{figure}[t]
    \centering
    \includegraphics[width=\columnwidth]{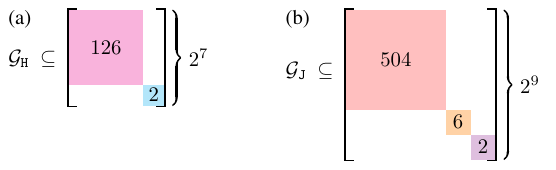}
    \caption{Hilbert space decompositions due to non-trivial hidden symmetries for the graphs (a) $\graphH$ and (b) $\graphJ$.}
    \label{fig:hilbert_space_decomposition}
\end{figure}

\begin{figure}[t]
    \centering
    \includegraphics[width=0.7\columnwidth]{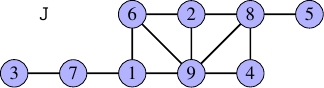}
    \vspace{0.2cm}
    \caption{Nine-vertex graph with trivial automorphism group and non-trivial Hamiltonian symmetries as shown in Fig.~\ref{fig:hilbert_space_decomposition}(b)
    \label{fig:counterexample-two}}
\end{figure}

We mention another graph $\graphJ$ on nine vertices that has only trivial automorphisms (see Fig.~\ref{fig:counterexample-two}).
Considering again the Hamiltonians from Eq.~\eqref{eq:generators},
we observe hidden symmetries spanning a three-dimensional space
and the resulting block
structure is detailed in Fig.~\ref{fig:hilbert_space_decomposition}(b). One feature of this example is that
one might not obtain universality even after adding one non-commuting Hamiltonian as not all of its symmetries need to be broken.

We also highlight that $\graphH$ and $\graphJ$ are only two of the possible graphs which satisfy Result~\ref{thm:main}.
We have also gone through all asymmetric graphs from six to eight vertices.
There are eight asymmetric graphs with six vertices and two of them have hidden symmetries
(but each of these symmetries are contained in the generated Lie algebra).
Out of the 144 asymmetric graphs over seven vertices, we have found 16 graphs for which $\gens_{\graphG}$ has non-trivial Hamiltonian symmetries.
Finally, we have found 228 examples over eight vertices with hidden symmetries, out of a total of 3696 asymmetric graphs.

We now provide an example which evades universality when the automorphism group is non-trivial, independently of any hidden symmetry.

\begin{theorem}\label{thm:main2}
There exists a connected graph $\graphG$ and a Hamiltonian $H_{\mathrm{break}}$ such that: (a) $\Aut(\graphG)$ is non-trivial, (b) there is no non-trivial element of $\Aut(\graphG)$ which commutes with $H_{\mathrm{break}}$ and (c) the Lie algebra generated by $\gens_{\graphG} \cup \{H_{\mathrm{break}}\}$ is \textit{not} universal.
In particular, Conjecture~\ref{conj:hu} is false.
\end{theorem}
{\it Proof of Result~\ref{thm:main2}.}
Here we consider a specific graph $\graphQ$ over $n$ qubits, as given in Fig.~\ref{fig:counterexample2}. It possesses two commuting reflection symmetries $R_1 = {\rm SWAP}_{a,b}$ and $R_2 = {\rm SWAP}_{c,d}$, which exchange the two pairs of qubits at the two ends of the chain ($n>10$).
Hence, (a) is true by construction and the automorphism group is abelian and consists of $\Aut(\graphQ) = \{\id,R_1,R_2,R_1R_2\}$.
\begin{figure}[t]
    \centering
    \includegraphics[width=0.9\columnwidth]{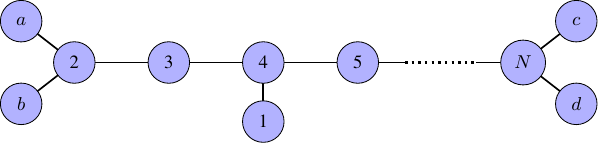}
    \vspace{0.2cm}
    \caption{Example of graph with non-trivial automorphism group. $N>6$ with $n=N+4$ vertices.}
    \label{fig:counterexample2}
\end{figure}
Then, we can consider the following symmetry breaking term:
\begin{align*}
        &H_{\rm break} = H_1 + H_2, \text{ where } \\
        &H_1 = (1+R_2)(X_a - X_b) \text{ and } H_2=(1+R_1)(X_c-X_d).
\end{align*}
It is not hard to check that $H_1$ anti-commutes with $R_1$ and commutes with $R_2$, and viceversa for $H_2$. 
Furthermore, $R_1R_2$ anti-commutes with $H_{\rm break}$.
Hence, $H_{\rm break}$ does not commute with any of the elements in $\Aut(\graphQ)$:
\begin{subequations}
\begin{align*}
        \comm{R_1}{H_{\rm break}} &= 2R_1H_1,\\
        \comm{R_2}{H_{\rm break}} &= 2R_2H_2,\\
        \comm{R_1R_2}{H_{\rm break}} &= 2R_1R_2H_{\rm break},
\end{align*}
\end{subequations}
which proves (b).
However, $H_{\rm break}$ still commutes with a linear combination of them (using $R_2H_1=H_1$, $R_1H_2=H_2$):
\begin{align}
        & \comm{R_1+R_2-R_1R_2}{H_{\rm break}} \nonumber\\
        &= 2R_1H_1 + 2R_2H_2 - 2R_1R_2(H_1+H_2) \nonumber\\
        &= 2R_1H_1 + 2R_2H_2 - 2R_1H_1 - 2R_2H_2\label{eq:symm_example2}
        = 0.
\end{align}
which proves (c). The proof is completed.
\hfill$\square$

\begin{figure}[t]
    \centering
    \includegraphics{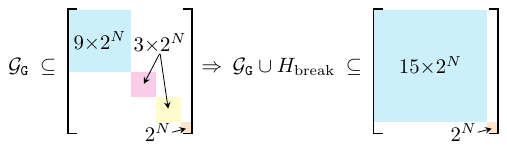}
    \caption{Comparison of Hilbert space decompositions for the graph in Fig.~\ref{fig:counterexample2} based on the automorphisms of the graphs.}
    \label{fig:comparison_decomposition}
\end{figure}

We can also discuss the presence of symmetries from the point of view of invariant subspaces (see Fig.~\ref{fig:comparison_decomposition}). 
The two reflections $R_1$ and $R_2$ alone split the Hilbert space into four subspaces, depending on whether the states are symmetric or anti-symmetric with respect to $R_i$, hence labelled by their eigenvalues, $s_1,s_2\in\{\pm\}$. 
Namely, given that the symmetries are localized at either of the two extreme qubits, we can use the tensor product structure of $R_1$ and $R_2$ to compute the joint eigenspaces.
Each invariant subspace has dimension $d_{s_1}d_{s_2}2^N$, with $d_+ = 3$ the dimension of the locally $R_i$-symmetric space (the triplet states) and $d_-=1$ the dimension of the anti-symmetric space (the singlet state).
By adding $H_{\rm break}$ we can join only three of these invariant subspaces, while still leaving the fully anti-symmetric subspace invariant. 
Namely, consider the projectors $\frac{1}{4}(1+s_1R_1)(1+s_2R_2)$ onto the invariant subspaces. 
One can show that the linear combinations of these which also commute with $H_{\rm break}$ are spanned by the projector onto the fully anti-symmetric subspace $s_1=s_2=-1$ and its complement. 
Explicitly, the projector with $s_1=s_2-1$ is given by $\frac{1}{4}(1-R_1-R_2+R_1R_2)$, which connects to the symmetry found in Eq.~\ref{eq:symm_example2}.
This follows from the (anti-)commutation relations of $H_{\rm break}$ with $R_1,R_2$ and the fact that $R_1H_1$ and $R_2H_2$ are linearly independent.
Hence, up to hidden symmetries (which do not appear in the minimal case $N=7$), the invariant subspaces of $\gens_{\graphQ}\cup H_{\rm break}$ are as in Fig.~\ref{fig:comparison_decomposition}.
This provides an example of symmetry which cannot be broken by simply breaking the permutation symmetries, hence rendering the automorphism group trivial.

\textit{\bfseries Discussion \& open questions.}~We have shown a counterexample to Conjecture~\ref{conj:hu} in the
restricted setting where $\graphG$ has trivial automorphism group.
The obstruction arises from \textit{hidden symmetries}---operators that
commute with the full generator set $\gens_\graphG$ but cannot be
expressed as permutation representations of any graph automorphism.
This directly undermines a generalized version of the construction in Ref.~\cite{hu2025universal}:
the proof of universality therein proceeds by showing that symmetry
breaking forces the dynamical Lie algebra to be all of
$\mathfrak{su}(2^n)$, but this argument implicitly assumes that all
symmetries of $\gens_\graphG$ are accounted for by $\Aut(\graphG)$.
Specifically, they assume that one can always realize the automorphism-symmetric subalgebra of $\su(2^n)$, which in the trivial automorphism case corresponds to the universal Lie algebra.
Our counterexample shows that this is not the case.

Our results establish that breaking all graph automorphisms is \textit{not}
sufficient for universality; a \textit{necessary} condition is that the hidden symmetries be broken as well.
More precisely, any $H_{\rm break}$ that commutes with a non-trivial
Hamiltonian symmetry will leave the
generated Lie algebra strictly below $\mathfrak{su}(2^n)$, regardless
of its action on $\Aut(\graphG)$.
The minimal requirement for universality is therefore that
$H_{\rm break}$ commutes with no non-trivial Hamiltonian symmetry, a condition strictly stronger than
rendering $\Aut(\graphG)$ trivial.
On the other hand, such a statement does not clarify \textit{which} symmetries need to be broken.

However, our numerical examples also highlight that hidden symmetries are not arbitrary. 
Namely, for the 16 examples we found over seven qubits, the Hilbert space breaks down into at most two components, one of which is of small size (always 2). 
This suggests a non-trivial behavior which is reminiscent of \textit{many body scars} \cite{Moudgalya_Bernevig_Regnault_2022}, i.e. special states which are constrained to small dimensional subspaces during time evolution.
Over eight vertices, a richer structure of invariant subspaces already appears, but again with only few invariant subspaces, which are either very large or very small.

For graphs with no hidden symmetries (i.e.~all spanned by $\Aut(\graphG)$), one might hope that breaking all
graph automorphisms would guarantee universality.
However, even 
when $\Aut(\graphG)$ is non-trivial, the Hamiltonian symmetries of the
\textit{extended} generator set $\gens_\graphG \cup \{H_{\rm break}\}$
need not be trivial even if $H_{\rm break}$ satisfies
Eq.~\eqref{eq:graph_symmetry_breaking}: non-trivial symmetries can
survive as linear combinations of the individual graph-symmetry
operators, even when none of those operators individually commutes with
$H_{\rm break}$.
Concretely, if $P_\sigma$ and $P_{\sigma'}$ are two distinct permutation
operators in $\Aut(\graphG)$, it is possible that $\comm{P_\sigma}{H_{\rm
break}} \neq 0$ and $\comm{P_{\sigma'}}{H_{\rm break}} \neq 0$ yet the
linear combination $\alpha P_\sigma + \beta P_{\sigma'}$ still commutes
with the full extended set---see Result~\ref{thm:main2} for details.
This shows that the symmetry-breaking criterion of
Conjecture~\ref{conj:hu} does not exhaust the relevant symmetry
structure even in the absence of hidden symmetries.

Finally, a natural question to ask is how \textit{rare} is the phenomenon of hidden symmetries, and in the absence of these, whether $\gens_{\graphG}$ realizes the automorphism-group symmetric subalgebra.
From our numerical examples, we found that only 16 out of 144 asymmetric graphs on seven
vertices, and 228 out of 3696 asymmetric graphs on eight vertices, exhibit
non-trivial hidden symmetries.
A relevant example, which points to more general behaviors, and has been largely explored, is that of the standard Quantum Approximate Optimization Algorithm (QAOA) setting \cite{Farhi_Goldstone_Gutmann_2014,Kazi_Larocca_Farinati_Coles_Cerezo_Zeier_2025}.There, the generating set consists of the Hamiltonians in Eq.~\eqref{eq:qaoa},
similar to ones in Eq.~\eqref{eq:generators} but without simultaneous single-qubit $Z$-controls. 
As such, they also naturally preserve the \textit{parity} symmetry $X^{\otimes n}$, which sends $Z_i$ into $-Z_i$ and leaves $X_i$ unchanged.
Specifically, in Ref.~\cite{Kazi_Larocca_Farinati_Coles_Cerezo_Zeier_2025}, it was found that with increasing size the asymmetric graphs with hidden symmetries become more and more rare.
It was also proven in Ref.~\cite{Mao_Yuan_Allcock_Zhang_2025} that \textit{almost all graphs} driven with the QAOA controls (Eq.~\eqref{eq:qaoa}) have both no hidden symmetries and coincide with the \textit{parity symmetric subalgebra} of $\su(2^n)$ (up to a phase term).
This also immediately shows that \textit{almost all graphs} in our setting are in fact universal, without the presence of any additional symmetry breaking term.
Indeed, for any graph which in the QAOA setting is universal up to the parity symmetry, any parity breaking term makes it universal.
This is precisely the case of the generating set~\eqref{eq:generators}, since the single qubit $Z$-controls break the parity symmetry.
\begin{fact}[\cite{Mao_Yuan_Allcock_Zhang_2025}]
Almost all connected graphs $\graphG=(V,E)$ supplemented by global control fields \eqref{eq:generators} realize universal quantum computation, without an additional control Hamiltonian $H_{\rm break}$.
\end{fact}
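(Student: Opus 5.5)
The statement in question is the Fact: almost all connected graphs with the controls \eqref{eq:generators} are universal. The plan is to reduce it to the QAOA result of Ref.~\cite{Mao_Yuan_Allcock_Zhang_2025} plus one Lie-algebraic extension step.

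\textbf{Step 1: invoke the QAOA result.} Let $\mathfrak{g}_0=\langle iH_X, iH_{ZZ}\rangle_{\mathrm{Lie}}$. By Ref.~\cite{Mao_Yuan_Allcock_Zhang_2025}, for almost all connected graphs $\graphG$ (a fraction tending to one as $n\to\infty$) we have
\begin{equation*}
\mathfrak{g}_0 = \{A\in\su(2^n) : [A,X^{\otimes n}]=0\}
\end{equation*}
up to the possible central phase direction. Here $X^{\otimes n}$ is the parity operator. Call this parity-symmetric algebra $\mathfrak{k}$. I would restrict the remainder of the argument to such graphs.

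\textbf{Step 2: decompose under parity.} Write $\su(2^n)=\mathfrak{k}\oplus\mathfrak{p}$, where $\mathfrak{p}$ consists of the traceless skew-Hermitian operators anticommuting with $P=X^{\otimes n}$. This is a Cartan-type decomposition with $[\mathfrak{k},\mathfrak{p}]\subseteq\mathfrak{p}$ and $[\mathfrak{p},\mathfrak{p}]\subseteq\mathfrak{k}$. In the $\pm1$ eigenbasis of $P$, we have $\mathfrak{k}\cong \mathfrak{s}(\mathfrak{u}(2^{n-1})\oplus\mathfrak{u}(2^{n-1}))$, and $\mathfrak{p}$ consists of the off-diagonal blocks. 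Since $PZ_jP=-Z_j$, the extra generator $iH_Z$ lies in $\mathfrak{p}$, and it is nonzero.

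\textbf{Step 3: generate all of $\mathfrak{p}$.} As a $\mathfrak{k}$-module, $\mathfrak{p}$ is the realification of $\mathbb{C}^{2^{n-1}}\otimes\overline{\mathbb{C}^{2^{n-1}}}$ under $\mathfrak{su}(2^{n-1})\oplus\mathfrak{su}(2^{n-1})$. This module is irreducible: the tensor product of defining representations is irreducible and complex, and the central $\mathfrak{u}(1)$ element acts on it by a nonzero scalar multiple of $i$, which makes it irreducible over the reals. Hence the repeated brackets of $\mathfrak{k}$ with the nonzero element $iH_Z$ span all of $\mathfrak{p}$.

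The step I expect to be the main obstacle is the "up to a phase" caveat. If $\mathfrak{g}_0$ equals only $[\mathfrak{k},\mathfrak{k}]$, i.e.\ it misses the central element $iP$ (traceless for $n\ge1$), then $\mathfrak{p}$ splits over the semisimple part into two irreducible pieces of the form $V$ and $\overline V$. In that case I would instead use the fact that $[iP,\cdot]$ interchanges these pieces, and argue that $iH_Z$ has nonzero components in both. This holds because $H_Z$ is Hermitian and its off-diagonal block $B$ is nonzero. Alternatively, one obtains $iP$ directly, since $\mathfrak{p}$ brackets with itself produce it.

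\textbf{Step 4: conclude.} The generated algebra then contains $\mathfrak{k}+\mathfrak{p}=\su(2^n)$. This holds for every graph in the almost-all set from Step 1, which proves universality without any $H_{\rm break}$.
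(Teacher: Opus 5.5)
Your route is the paper's route. You quote Ref.~\cite{Mao_Yuan_Allcock_Zhang_2025} for the QAOA algebra and then argue that the parity-odd generator $iH_Z$ completes it to $\su(2^n)$. The paper only asserts this last step (``any parity breaking term makes it universal''), so the real content of your proposal is Step 3, and that is where it breaks.

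Your main argument needs the central element $iP$, $P=X^{\otimes n}$, to lie in $\mathfrak{g}_0$. This never happens for $n\ge 2$. The generators $iH_X$ and $iH_{ZZ}$ commute with $P$ and satisfy $\operatorname{tr}(H_XP)=\operatorname{tr}(H_{ZZ}P)=0$, so they lie in $[\mathfrak{k},\mathfrak{k}]\cong\su(2^{n-1})\oplus\su(2^{n-1})$. That space is closed under brackets, hence $\mathfrak{g}_0\subseteq[\mathfrak{k},\mathfrak{k}]$ for every graph. Your ``contingency'' is therefore the only case, and the argument you give for it does not hold up:
\begin{itemize}
\item Over $\mathbb{R}$, $\mathfrak{p}$ does not split into $V$ and $\overline V$; only $\mathfrak{p}\otimes\mathbb{C}=V\oplus\overline V$ does.
\item $\ad_{iP}$ does not interchange $V$ and $\overline V$. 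It preserves each and acts on them by $\pm 2i$; complex conjugation with respect to $\mathfrak{p}$ is what swaps them. In any case $iP$ is not available.
\item Obtaining $iP$ from $[\mathfrak{p},\mathfrak{p}]$ is circular, because you do not yet know that the module generated by $iH_Z$ is all of $\mathfrak{p}$.
\end{itemize}

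The missing idea is that $\mathfrak{p}$ is already irreducible over the semisimple part $[\mathfrak{k},\mathfrak{k}]$ once $n\ge 3$.
\begin{itemize}
\item With $m=2^{n-1}\ge 4$, the complex modules $V=\mathbb{C}^m\boxtimes\overline{\mathbb{C}^m}$ and $\overline V=\overline{\mathbb{C}^m}\boxtimes\mathbb{C}^m$ of $\su(m)\oplus\su(m)$ are non-isomorphic. This is because the defining representation of $\su(m)$ is not self-conjugate for $m\ge 3$.
\item Hence the only submodules of $V\oplus\overline V$ are $0$, $V$, $\overline V$ and $V\oplus\overline V$.
\item If $W\subseteq\mathfrak{p}$ is a nonzero real submodule, then $W\otimes\mathbb{C}$ is conjugation-stable. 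It must therefore equal $V\oplus\overline V$, i.e.\ $W=\mathfrak{p}$.
\end{itemize}
This non-isomorphism is exactly what makes your observation that $iH_Z$ has nonzero components in both pieces sufficient. With it, the $[\mathfrak{k},\mathfrak{k}]$-module generated by $iH_Z$ is $\mathfrak{p}$. Then $[\mathfrak{p},\mathfrak{p}]=\mathfrak{k}$ supplies $iP$, and the generated algebra is $\su(2^n)$.

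For $n=2$ one has $V\cong\overline V$ of real type, and $\mathfrak{p}$ really does split into two four-dimensional real pieces. That is why the non-isomorphism has to be stated explicitly; it is harmless for an asymptotic ``almost all'' statement.
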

Nevertheless, this does not answer the question of whether practically relevant connectivities also satisfy the conditions for universality discussed in Ref.~\cite{Mao_Yuan_Allcock_Zhang_2025}. We leave it for future investigation.

Finally, we comment on what could be a possible complete set of necessary and sufficient conditions for universality over \textit{all} graphs.
From general properties of semisimple Lie algebras \cite{Zeier_Schulte-Herbrueggen_2011}, it is known that a necessary and sufficient condition for universality of \textit{any generating set} may be understood in a symmetry perspective by looking of the \textit{adjoint representation}, or Liovillian, $\ad_{H_i} = \comm{H_i}{\cdot}$, for $H_i\in\gens$.
Namely, a generating set is universal if and only if the set of \textit{adjoint} symmetries $\comm{\mathcal{S}}{\ad_{H_i}} = 0$, as a vector space, is exactly two-dimensional. 
These symmetries $\mathcal{S}$ are now longer operators on the Hilbert space, but \textit{superoperators}.
Then, the proper symmetry criterion for universality in our setting is the following: $\gens\cup H_{\mathrm{break}}$ is a universal set of controls if and only if $H_{\mathrm{break}}$ breaks all \textit{adjoint} symmetries of $\gens$.
However, much like finding \textit{hidden} symmetries of the Hamiltonians is a non-trivial task in general, we also expect that finding \textit{adjoint} symmetries will not be easy for arbitrary graphs.

\textit{\bfseries Acknowledgments.} 
We acknowledge the use of software MAGMA~\cite{MR1484478}.
This project originated during the ``Quantum at the Dunes'' workshop
held in March 2026 in Natal, Brazil, during a discussion between RG and RM.
We gratefully acknowledge the organizers for providing a stimulating
environment for discussion.
RM thanks H.-Y.~Hu for useful discussions concerning the {\it optimal direct method} developed in Ref.~\cite{hu2025universal}. 
RG and RZ acknowledge funding
under Horizon Europe programme HORIZON-CL4-2022-QUANTUM-02-SGA via the project 
\href{https://doi.org/10.3030/101113690}{101113690} (PASQuanS2.1).
RM and VG acknowledge financial support by MUR (Ministero
dell’Universit\`a e della Ricerca) through the PNRR MUR
project PE0000023-NQSTI.

We note, not without some satisfaction, that the sequence of author
initials R--R--V--R defines a path graph whose automorphism group is
already trivial---so at least the author list is universal.

\bibliography{biblio}

\begin{thebibliography}{37}%
\makeatletter
\providecommand \@ifxundefined [1]{%
 \@ifx{#1\undefined}
}%
\providecommand \@ifnum [1]{%
 \ifnum #1\expandafter \@firstoftwo
 \else \expandafter \@secondoftwo
 \fi
}%
\providecommand \@ifx [1]{%
 \ifx #1\expandafter \@firstoftwo
 \else \expandafter \@secondoftwo
 \fi
}%
\providecommand \natexlab [1]{#1}%
\providecommand \enquote  [1]{``#1''}%
\providecommand \bibnamefont  [1]{#1}%
\providecommand \bibfnamefont [1]{#1}%
\providecommand \citenamefont [1]{#1}%
\providecommand \href@noop [0]{\@secondoftwo}%
\providecommand \href [0]{\begingroup \@sanitize@url \@href}%
\providecommand \@href[1]{\@@startlink{#1}\@@href}%
\providecommand \@@href[1]{\endgroup#1\@@endlink}%
\providecommand \@sanitize@url [0]{\catcode `\\12\catcode `\$12\catcode `\&12\catcode `\#12\catcode `\^12\catcode `\_12\catcode `\%12\relax}%
\providecommand \@@startlink[1]{}%
\providecommand \@@endlink[0]{}%
\providecommand \url  [0]{\begingroup\@sanitize@url \@url }%
\providecommand \@url [1]{\endgroup\@href {#1}{\urlprefix }}%
\providecommand \urlprefix  [0]{URL }%
\providecommand \Eprint [0]{\href }%
\providecommand \doibase [0]{https://doi.org/}%
\providecommand \selectlanguage [0]{\@gobble}%
\providecommand \bibinfo  [0]{\@secondoftwo}%
\providecommand \bibfield  [0]{\@secondoftwo}%
\providecommand \translation [1]{[#1]}%
\providecommand \BibitemOpen [0]{}%
\providecommand \bibitemStop [0]{}%
\providecommand \bibitemNoStop [0]{.\EOS\space}%
\providecommand \EOS [0]{\spacefactor3000\relax}%
\providecommand \BibitemShut  [1]{\csname bibitem#1\endcsname}%
\let\auto@bib@innerbib\@empty
\bibitem [{\citenamefont {Lloyd}(1993)}]{Lloyd_1993}%
  \BibitemOpen
  \bibfield  {author} {\bibinfo {author} {\bibfnamefont {S.}~\bibnamefont {Lloyd}},\ }\bibfield  {title} {\bibinfo {title} {A potentially realizable quantum computer},\ }\href {https://doi.org/10.1126/science.261.5128.1569} {\bibfield  {journal} {\bibinfo  {journal} {Science}\ }\textbf {\bibinfo {volume} {261}},\ \bibinfo {pages} {1569} (\bibinfo {year} {1993})}\BibitemShut {NoStop}%
\bibitem [{\citenamefont {Benjamin}(2000)}]{benjamin_2000}%
  \BibitemOpen
  \bibfield  {author} {\bibinfo {author} {\bibfnamefont {S.~C.}\ \bibnamefont {Benjamin}},\ }\bibfield  {title} {\bibinfo {title} {Schemes for parallel quantum computation without local control of qubits},\ }\href {https://doi.org/10.1103/PhysRevA.61.020301} {\bibfield  {journal} {\bibinfo  {journal} {Phys. Rev. A}\ }\textbf {\bibinfo {volume} {61}},\ \bibinfo {pages} {020301} (\bibinfo {year} {2000})}\BibitemShut {NoStop}%
\bibitem [{\citenamefont {Benjamin}(2001)}]{benjamin_2001}%
  \BibitemOpen
  \bibfield  {author} {\bibinfo {author} {\bibfnamefont {S.~C.}\ \bibnamefont {Benjamin}},\ }\bibfield  {title} {\bibinfo {title} {Quantum computing without local control of qubit-qubit interactions},\ }\href {https://doi.org/10.1103/PhysRevLett.88.017904} {\bibfield  {journal} {\bibinfo  {journal} {Phys. Rev. Lett.}\ }\textbf {\bibinfo {volume} {88}},\ \bibinfo {pages} {017904} (\bibinfo {year} {2001})}\BibitemShut {NoStop}%
\bibitem [{\citenamefont {Benjamin}\ and\ \citenamefont {Bose}(2003)}]{benjamin_2003}%
  \BibitemOpen
  \bibfield  {author} {\bibinfo {author} {\bibfnamefont {S.~C.}\ \bibnamefont {Benjamin}}\ and\ \bibinfo {author} {\bibfnamefont {S.}~\bibnamefont {Bose}},\ }\bibfield  {title} {\bibinfo {title} {Quantum computing with an always-on {Heisenberg} interaction},\ }\href {https://doi.org/10.1103/PhysRevLett.90.247901} {\bibfield  {journal} {\bibinfo  {journal} {Phys. Rev. Lett.}\ }\textbf {\bibinfo {volume} {90}},\ \bibinfo {pages} {247901} (\bibinfo {year} {2003})}\BibitemShut {NoStop}%
\bibitem [{\citenamefont {Fitzsimons}\ and\ \citenamefont {Twamley}(2006)}]{Fitzsimons2006}%
  \BibitemOpen
  \bibfield  {author} {\bibinfo {author} {\bibfnamefont {J.}~\bibnamefont {Fitzsimons}}\ and\ \bibinfo {author} {\bibfnamefont {J.}~\bibnamefont {Twamley}},\ }\bibfield  {title} {\bibinfo {title} {Globally controlled quantum wires for perfect qubit transport, mirroring, and computing},\ }\href {https://doi.org/10.1103/PhysRevLett.97.090502} {\bibfield  {journal} {\bibinfo  {journal} {Phys. Rev. Lett.}\ }\textbf {\bibinfo {volume} {97}},\ \bibinfo {pages} {090502} (\bibinfo {year} {2006})}\BibitemShut {NoStop}%
\bibitem [{\citenamefont {Paz-Silva}\ \emph {et~al.}(2009)\citenamefont {Paz-Silva}, \citenamefont {Brennen},\ and\ \citenamefont {Twamley}}]{Paz-Silva2009}%
  \BibitemOpen
  \bibfield  {author} {\bibinfo {author} {\bibfnamefont {G.~A.}\ \bibnamefont {Paz-Silva}}, \bibinfo {author} {\bibfnamefont {G.~K.}\ \bibnamefont {Brennen}},\ and\ \bibinfo {author} {\bibfnamefont {J.}~\bibnamefont {Twamley}},\ }\bibfield  {title} {\bibinfo {title} {Globally controlled universal quantum computation with arbitrary subsystem dimension},\ }\href {https://doi.org/10.1103/PhysRevA.80.052318} {\bibfield  {journal} {\bibinfo  {journal} {Phys. Rev. A}\ }\textbf {\bibinfo {volume} {80}},\ \bibinfo {pages} {052318} (\bibinfo {year} {2009})}\BibitemShut {NoStop}%
\bibitem [{\citenamefont {Lloyd}(2018)}]{LloydQAOA}%
  \BibitemOpen
  \bibfield  {author} {\bibinfo {author} {\bibfnamefont {S.}~\bibnamefont {Lloyd}},\ }\href@noop {} {\bibinfo {title} {Quantum approximate optimization is computationally universal}} (\bibinfo {year} {2018}),\ \Eprint {https://arxiv.org/abs/1812.11075} {arXiv:1812.11075} \BibitemShut {NoStop}%
\bibitem [{\citenamefont {Morales}\ \emph {et~al.}(2020)\citenamefont {Morales}, \citenamefont {Biamonte},\ and\ \citenamefont {Zimbor{\'a}s}}]{morales2020universality}%
  \BibitemOpen
  \bibfield  {author} {\bibinfo {author} {\bibfnamefont {M.~E.}\ \bibnamefont {Morales}}, \bibinfo {author} {\bibfnamefont {J.}~\bibnamefont {Biamonte}},\ and\ \bibinfo {author} {\bibfnamefont {Z.}~\bibnamefont {Zimbor{\'a}s}},\ }\bibfield  {title} {\bibinfo {title} {On the universality of the quantum approximate optimization algorithm},\ }\href {https://doi.org/10.1007/s11128-020-02748-9} {\bibfield  {journal} {\bibinfo  {journal} {Quantum Inf. Process.}\ }\textbf {\bibinfo {volume} {19}},\ \bibinfo {pages} {1} (\bibinfo {year} {2020})}\BibitemShut {NoStop}%
\bibitem [{\citenamefont {Cesa}\ and\ \citenamefont {Pichler}(2023)}]{cesa2023}%
  \BibitemOpen
  \bibfield  {author} {\bibinfo {author} {\bibfnamefont {F.}~\bibnamefont {Cesa}}\ and\ \bibinfo {author} {\bibfnamefont {H.}~\bibnamefont {Pichler}},\ }\bibfield  {title} {\bibinfo {title} {Universal quantum computation in globally driven {Rydberg} atom arrays},\ }\href {https://doi.org/10.1103/PhysRevLett.131.170601} {\bibfield  {journal} {\bibinfo  {journal} {Phys. Rev. Lett.}\ }\textbf {\bibinfo {volume} {131}},\ \bibinfo {pages} {170601} (\bibinfo {year} {2023})}\BibitemShut {NoStop}%
\bibitem [{\citenamefont {Menta}\ \emph {et~al.}(2025)\citenamefont {Menta}, \citenamefont {Cioni}, \citenamefont {Aiudi}, \citenamefont {Polini},\ and\ \citenamefont {Giovannetti}}]{menta2024globally}%
  \BibitemOpen
  \bibfield  {author} {\bibinfo {author} {\bibfnamefont {R.}~\bibnamefont {Menta}}, \bibinfo {author} {\bibfnamefont {F.}~\bibnamefont {Cioni}}, \bibinfo {author} {\bibfnamefont {R.}~\bibnamefont {Aiudi}}, \bibinfo {author} {\bibfnamefont {M.}~\bibnamefont {Polini}},\ and\ \bibinfo {author} {\bibfnamefont {V.}~\bibnamefont {Giovannetti}},\ }\bibfield  {title} {\bibinfo {title} {Globally driven superconducting quantum computing architecture},\ }\href {https://doi.org/10.1103/PhysRevResearch.7.L012065} {\bibfield  {journal} {\bibinfo  {journal} {Phys. Rev. Res.}\ }\textbf {\bibinfo {volume} {7}},\ \bibinfo {pages} {L012065} (\bibinfo {year} {2025})}\BibitemShut {NoStop}%
\bibitem [{\citenamefont {Cioni}\ \emph {et~al.}(2026)\citenamefont {Cioni}, \citenamefont {Menta}, \citenamefont {Aiudi}, \citenamefont {Polini},\ and\ \citenamefont {Giovannetti}}]{cioni2024conveyorbelt}%
  \BibitemOpen
  \bibfield  {author} {\bibinfo {author} {\bibfnamefont {F.}~\bibnamefont {Cioni}}, \bibinfo {author} {\bibfnamefont {R.}~\bibnamefont {Menta}}, \bibinfo {author} {\bibfnamefont {R.}~\bibnamefont {Aiudi}}, \bibinfo {author} {\bibfnamefont {M.}~\bibnamefont {Polini}},\ and\ \bibinfo {author} {\bibfnamefont {V.}~\bibnamefont {Giovannetti}},\ }\bibfield  {title} {\bibinfo {title} {Conveyor-belt superconducting quantum computer},\ }\href {https://doi.org/10.1103/6zzp-ctyx} {\bibfield  {journal} {\bibinfo  {journal} {Phys. Rev. A}\ }\textbf {\bibinfo {volume} {113}},\ \bibinfo {pages} {012439} (\bibinfo {year} {2026})}\BibitemShut {NoStop}%
\bibitem [{\citenamefont {Menta}\ \emph {et~al.}(2026{\natexlab{a}})\citenamefont {Menta}, \citenamefont {Cioni}, \citenamefont {Aiudi}, \citenamefont {Caravelli}, \citenamefont {Polini},\ and\ \citenamefont {Giovannetti}}]{menta2025building}%
  \BibitemOpen
  \bibfield  {author} {\bibinfo {author} {\bibfnamefont {R.}~\bibnamefont {Menta}}, \bibinfo {author} {\bibfnamefont {F.}~\bibnamefont {Cioni}}, \bibinfo {author} {\bibfnamefont {R.}~\bibnamefont {Aiudi}}, \bibinfo {author} {\bibfnamefont {F.}~\bibnamefont {Caravelli}}, \bibinfo {author} {\bibfnamefont {M.}~\bibnamefont {Polini}},\ and\ \bibinfo {author} {\bibfnamefont {V.}~\bibnamefont {Giovannetti}},\ }\bibfield  {title} {\bibinfo {title} {Building globally controlled quantum processors with {$ZZ$} interactions},\ }\href {https://doi.org/10.1103/lz5d-lnz2} {\bibfield  {journal} {\bibinfo  {journal} {Phys. Rev. A}\ }\textbf {\bibinfo {volume} {113}},\ \bibinfo {pages} {012614} (\bibinfo {year} {2026}{\natexlab{a}})}\BibitemShut {NoStop}%
\bibitem [{\citenamefont {Mitarai}\ \emph {et~al.}(2026)\citenamefont {Mitarai}, \citenamefont {Tadokoro},\ and\ \citenamefont {Tanaka}}]{Mitarai2026}%
  \BibitemOpen
  \bibfield  {author} {\bibinfo {author} {\bibfnamefont {H.}~\bibnamefont {Mitarai}}, \bibinfo {author} {\bibfnamefont {Y.}~\bibnamefont {Tadokoro}},\ and\ \bibinfo {author} {\bibfnamefont {H.}~\bibnamefont {Tanaka}},\ }\bibfield  {title} {\bibinfo {title} {Orthogonal frequency-division multiplexing for simultaneous gate operations on multiple qubits via a shared control line},\ }\href {https://doi.org/10.1103/fmqf-w6ht} {\bibfield  {journal} {\bibinfo  {journal} {Phys. Rev. Appl.}\ }\textbf {\bibinfo {volume} {25}},\ \bibinfo {pages} {044007} (\bibinfo {year} {2026})}\BibitemShut {NoStop}%
\bibitem [{\citenamefont {Riccardi}\ \emph {et~al.}(2026)\citenamefont {Riccardi}, \citenamefont {Glezer~Moshe}, \citenamefont {Menichetti}, \citenamefont {Aiudi}, \citenamefont {Cosenza}, \citenamefont {Abedi}, \citenamefont {Menta}, \citenamefont {Ahmad}, \citenamefont {Nieri~Orfatti}, \citenamefont {Cioni}, \citenamefont {Massarotti}, \citenamefont {Tafuri}, \citenamefont {Giovannetti}, \citenamefont {Polini}, \citenamefont {Caravelli},\ and\ \citenamefont {Szombati}}]{riccardi2026}%
  \BibitemOpen
  \bibfield  {author} {\bibinfo {author} {\bibfnamefont {M.}~\bibnamefont {Riccardi}}, \bibinfo {author} {\bibfnamefont {A.}~\bibnamefont {Glezer~Moshe}}, \bibinfo {author} {\bibfnamefont {G.}~\bibnamefont {Menichetti}}, \bibinfo {author} {\bibfnamefont {R.}~\bibnamefont {Aiudi}}, \bibinfo {author} {\bibfnamefont {C.}~\bibnamefont {Cosenza}}, \bibinfo {author} {\bibfnamefont {A.}~\bibnamefont {Abedi}}, \bibinfo {author} {\bibfnamefont {R.}~\bibnamefont {Menta}}, \bibinfo {author} {\bibfnamefont {H.~G.}\ \bibnamefont {Ahmad}}, \bibinfo {author} {\bibfnamefont {D.}~\bibnamefont {Nieri~Orfatti}}, \bibinfo {author} {\bibfnamefont {F.}~\bibnamefont {Cioni}}, \bibinfo {author} {\bibfnamefont {D.}~\bibnamefont {Massarotti}}, \bibinfo {author} {\bibfnamefont {F.}~\bibnamefont {Tafuri}}, \bibinfo {author} {\bibfnamefont {V.}~\bibnamefont {Giovannetti}}, \bibinfo {author} {\bibfnamefont {M.}~\bibnamefont {Polini}}, \bibinfo {author} {\bibfnamefont {F.}~\bibnamefont {Caravelli}},\ and\ \bibinfo {author} {\bibfnamefont
  {D.}~\bibnamefont {Szombati}},\ }\href@noop {} {\bibinfo {title} {Experimental observation of dynamical blockade between transmon qubits via zz interaction engineering}} (\bibinfo {year} {2026}),\ \Eprint {https://arxiv.org/abs/2601.11714} {arXiv:2601.11714} \BibitemShut {NoStop}%
\bibitem [{\citenamefont {Aiudi}\ \emph {et~al.}(2026)\citenamefont {Aiudi}, \citenamefont {Despres}, \citenamefont {Menta}, \citenamefont {Abedi}, \citenamefont {Menichetti}, \citenamefont {Giovannetti}, \citenamefont {Polini},\ and\ \citenamefont {Caravelli}}]{aiudi2026}%
  \BibitemOpen
  \bibfield  {author} {\bibinfo {author} {\bibfnamefont {R.}~\bibnamefont {Aiudi}}, \bibinfo {author} {\bibfnamefont {J.}~\bibnamefont {Despres}}, \bibinfo {author} {\bibfnamefont {R.}~\bibnamefont {Menta}}, \bibinfo {author} {\bibfnamefont {A.}~\bibnamefont {Abedi}}, \bibinfo {author} {\bibfnamefont {G.}~\bibnamefont {Menichetti}}, \bibinfo {author} {\bibfnamefont {V.}~\bibnamefont {Giovannetti}}, \bibinfo {author} {\bibfnamefont {M.}~\bibnamefont {Polini}},\ and\ \bibinfo {author} {\bibfnamefont {F.}~\bibnamefont {Caravelli}},\ }\bibfield  {title} {\bibinfo {title} {Overcoming disorder in superconducting globally driven quantum computing},\ }\href {https://doi.org/10.1103/zzzc-nqxd} {\bibfield  {journal} {\bibinfo  {journal} {Phys. Rev. A}\ }\textbf {\bibinfo {volume} {113}},\ \bibinfo {pages} {012616} (\bibinfo {year} {2026})}\BibitemShut {NoStop}%
\bibitem [{\citenamefont {Bondar}\ \emph {et~al.}(2025)\citenamefont {Bondar}, \citenamefont {Gaggioli}, \citenamefont {Korpas}, \citenamefont {Marecek}, \citenamefont {Vala},\ and\ \citenamefont {Jacobs}}]{Bondar2025}%
  \BibitemOpen
  \bibfield  {author} {\bibinfo {author} {\bibfnamefont {D.~I.}\ \bibnamefont {Bondar}}, \bibinfo {author} {\bibfnamefont {L.~B.}\ \bibnamefont {Gaggioli}}, \bibinfo {author} {\bibfnamefont {G.}~\bibnamefont {Korpas}}, \bibinfo {author} {\bibfnamefont {J.}~\bibnamefont {Marecek}}, \bibinfo {author} {\bibfnamefont {J.}~\bibnamefont {Vala}},\ and\ \bibinfo {author} {\bibfnamefont {K.}~\bibnamefont {Jacobs}},\ }\bibfield  {title} {\bibinfo {title} {Globally optimal control of quantum dynamics},\ }\href {https://doi.org/10.1103/g4fb-xm13} {\bibfield  {journal} {\bibinfo  {journal} {Phys. Rev. Res.}\ }\textbf {\bibinfo {volume} {7}},\ \bibinfo {pages} {043202} (\bibinfo {year} {2025})}\BibitemShut {NoStop}%
\bibitem [{\citenamefont {Cesa}\ \emph {et~al.}(2026)\citenamefont {Cesa}, \citenamefont {Di~Fini}, \citenamefont {Korbany}, \citenamefont {Tricarico}, \citenamefont {Bernien}, \citenamefont {Pichler},\ and\ \citenamefont {Piroli}}]{cesa2026engineering}%
  \BibitemOpen
  \bibfield  {author} {\bibinfo {author} {\bibfnamefont {F.}~\bibnamefont {Cesa}}, \bibinfo {author} {\bibfnamefont {A.}~\bibnamefont {Di~Fini}}, \bibinfo {author} {\bibfnamefont {D.~A.}\ \bibnamefont {Korbany}}, \bibinfo {author} {\bibfnamefont {R.}~\bibnamefont {Tricarico}}, \bibinfo {author} {\bibfnamefont {H.}~\bibnamefont {Bernien}}, \bibinfo {author} {\bibfnamefont {H.}~\bibnamefont {Pichler}},\ and\ \bibinfo {author} {\bibfnamefont {L.}~\bibnamefont {Piroli}},\ }\href@noop {} {\bibinfo {title} {Engineering discrete local dynamics in globally driven dual-species atom arrays}} (\bibinfo {year} {2026}),\ \Eprint {https://arxiv.org/abs/2601.16961} {arXiv:2601.16961} \BibitemShut {NoStop}%
\bibitem [{\citenamefont {Menta}\ \emph {et~al.}(2026{\natexlab{b}})\citenamefont {Menta}, \citenamefont {Cioni}, \citenamefont {Aiudi}, \citenamefont {Polini},\ and\ \citenamefont {Giovannetti}}]{menta2026global}%
  \BibitemOpen
  \bibfield  {author} {\bibinfo {author} {\bibfnamefont {R.}~\bibnamefont {Menta}}, \bibinfo {author} {\bibfnamefont {F.}~\bibnamefont {Cioni}}, \bibinfo {author} {\bibfnamefont {R.}~\bibnamefont {Aiudi}}, \bibinfo {author} {\bibfnamefont {M.}~\bibnamefont {Polini}},\ and\ \bibinfo {author} {\bibfnamefont {V.}~\bibnamefont {Giovannetti}},\ }\href@noop {} {\bibinfo {title} {Global control via quantum actuators}} (\bibinfo {year} {2026}{\natexlab{b}}),\ \Eprint {https://arxiv.org/abs/2603.23362} {arXiv:2603.23362} \BibitemShut {NoStop}%
\bibitem [{\citenamefont {White}\ \emph {et~al.}(2026)\citenamefont {White}, \citenamefont {Ramesh}, \citenamefont {Impertro}, \citenamefont {Anand}, \citenamefont {Cesa}, \citenamefont {Giudici}, \citenamefont {Iadecola}, \citenamefont {Pichler},\ and\ \citenamefont {Bernien}}]{White2026}%
  \BibitemOpen
  \bibfield  {author} {\bibinfo {author} {\bibfnamefont {R.}~\bibnamefont {White}}, \bibinfo {author} {\bibfnamefont {V.}~\bibnamefont {Ramesh}}, \bibinfo {author} {\bibfnamefont {A.}~\bibnamefont {Impertro}}, \bibinfo {author} {\bibfnamefont {S.}~\bibnamefont {Anand}}, \bibinfo {author} {\bibfnamefont {F.}~\bibnamefont {Cesa}}, \bibinfo {author} {\bibfnamefont {G.}~\bibnamefont {Giudici}}, \bibinfo {author} {\bibfnamefont {T.}~\bibnamefont {Iadecola}}, \bibinfo {author} {\bibfnamefont {H.}~\bibnamefont {Pichler}},\ and\ \bibinfo {author} {\bibfnamefont {H.}~\bibnamefont {Bernien}},\ }\href@noop {} {\bibinfo {title} {Quantum cellular automata on a dual-species {Rydberg} processor}} (\bibinfo {year} {2026}),\ \Eprint {https://arxiv.org/abs/2601.16257} {arXiv:2601.16257} \BibitemShut {NoStop}%
\bibitem [{\citenamefont {Calliari}\ \emph {et~al.}(2026)\citenamefont {Calliari}, \citenamefont {Fromonteil}, \citenamefont {Cesa}, \citenamefont {Zache}, \citenamefont {Preiss}, \citenamefont {Ott},\ and\ \citenamefont {Pichler}}]{Calliari2026}%
  \BibitemOpen
  \bibfield  {author} {\bibinfo {author} {\bibfnamefont {G.}~\bibnamefont {Calliari}}, \bibinfo {author} {\bibfnamefont {C.}~\bibnamefont {Fromonteil}}, \bibinfo {author} {\bibfnamefont {F.}~\bibnamefont {Cesa}}, \bibinfo {author} {\bibfnamefont {T.~V.}\ \bibnamefont {Zache}}, \bibinfo {author} {\bibfnamefont {P.~M.}\ \bibnamefont {Preiss}}, \bibinfo {author} {\bibfnamefont {R.}~\bibnamefont {Ott}},\ and\ \bibinfo {author} {\bibfnamefont {H.}~\bibnamefont {Pichler}},\ }\href@noop {} {\bibinfo {title} {Programmable fermionic quantum processors with globally controlled lattices}} (\bibinfo {year} {2026}),\ \Eprint {https://arxiv.org/abs/2604.13160} {arXiv:2604.13160} \BibitemShut {NoStop}%
\bibitem [{\citenamefont {Barenco}\ \emph {et~al.}(1995)\citenamefont {Barenco}, \citenamefont {Bennett}, \citenamefont {Cleve}, \citenamefont {DiVincenzo}, \citenamefont {Margolus}, \citenamefont {Shor}, \citenamefont {Sleator}, \citenamefont {Smolin},\ and\ \citenamefont {Weinfurter}}]{Barenco_elementary_1995}%
  \BibitemOpen
  \bibfield  {author} {\bibinfo {author} {\bibfnamefont {A.}~\bibnamefont {Barenco}}, \bibinfo {author} {\bibfnamefont {C.~H.}\ \bibnamefont {Bennett}}, \bibinfo {author} {\bibfnamefont {R.}~\bibnamefont {Cleve}}, \bibinfo {author} {\bibfnamefont {D.~P.}\ \bibnamefont {DiVincenzo}}, \bibinfo {author} {\bibfnamefont {N.}~\bibnamefont {Margolus}}, \bibinfo {author} {\bibfnamefont {P.}~\bibnamefont {Shor}}, \bibinfo {author} {\bibfnamefont {T.}~\bibnamefont {Sleator}}, \bibinfo {author} {\bibfnamefont {J.~A.}\ \bibnamefont {Smolin}},\ and\ \bibinfo {author} {\bibfnamefont {H.}~\bibnamefont {Weinfurter}},\ }\bibfield  {title} {\bibinfo {title} {Elementary gates for quantum computation},\ }\href {https://doi.org/10.1103/PhysRevA.52.3457} {\bibfield  {journal} {\bibinfo  {journal} {Phys. Rev. A}\ }\textbf {\bibinfo {volume} {52}},\ \bibinfo {pages} {3457} (\bibinfo {year} {1995})}\BibitemShut {NoStop}%
\bibitem [{\citenamefont {Ramakrishna}\ \emph {et~al.}(1995)\citenamefont {Ramakrishna}, \citenamefont {Salapaka}, \citenamefont {Dahleh}, \citenamefont {Rabitz},\ and\ \citenamefont {Peirce}}]{Ramakrishna_1995}%
  \BibitemOpen
  \bibfield  {author} {\bibinfo {author} {\bibfnamefont {V.}~\bibnamefont {Ramakrishna}}, \bibinfo {author} {\bibfnamefont {M.~V.}\ \bibnamefont {Salapaka}}, \bibinfo {author} {\bibfnamefont {M.}~\bibnamefont {Dahleh}}, \bibinfo {author} {\bibfnamefont {H.}~\bibnamefont {Rabitz}},\ and\ \bibinfo {author} {\bibfnamefont {A.}~\bibnamefont {Peirce}},\ }\bibfield  {title} {\bibinfo {title} {Controllability of molecular systems},\ }\href {https://doi.org/10.1103/PhysRevA.51.960} {\bibfield  {journal} {\bibinfo  {journal} {Phys. Rev. A}\ }\textbf {\bibinfo {volume} {51}},\ \bibinfo {pages} {960} (\bibinfo {year} {1995})}\BibitemShut {NoStop}%
\bibitem [{\citenamefont {Lloyd}(1996)}]{Lloyd_universality_1996}%
  \BibitemOpen
  \bibfield  {author} {\bibinfo {author} {\bibfnamefont {S.}~\bibnamefont {Lloyd}},\ }\bibfield  {title} {\bibinfo {title} {Universal quantum simulators},\ }\href {https://doi.org/10.1126/science.273.5278.1073} {\bibfield  {journal} {\bibinfo  {journal} {Science}\ }\textbf {\bibinfo {volume} {273}},\ \bibinfo {pages} {1073} (\bibinfo {year} {1996})}\BibitemShut {NoStop}%
\bibitem [{\citenamefont {Jurdjevic}\ and\ \citenamefont {Sussmann}(1972)}]{jurdjevic1972controllability}%
  \BibitemOpen
  \bibfield  {author} {\bibinfo {author} {\bibfnamefont {V.}~\bibnamefont {Jurdjevic}}\ and\ \bibinfo {author} {\bibfnamefont {H.~J.}\ \bibnamefont {Sussmann}},\ }\bibfield  {title} {\bibinfo {title} {Control systems on {L}ie groups},\ }\href {https://doi.org/10.1016/0022-0396(72)90035-6} {\bibfield  {journal} {\bibinfo  {journal} {J. Diff. Equat.}\ }\textbf {\bibinfo {volume} {12}},\ \bibinfo {pages} {313} (\bibinfo {year} {1972})}\BibitemShut {NoStop}%
\bibitem [{\citenamefont {D'Alessandro}(2022)}]{dalessandro2022}%
  \BibitemOpen
  \bibfield  {author} {\bibinfo {author} {\bibfnamefont {D.}~\bibnamefont {D'Alessandro}},\ }\href {https://doi.org/10.1201/9781003051268} {\emph {\bibinfo {title} {Introduction to Quantum Control and Dynamics}}},\ \bibinfo {edition} {2nd}\ ed.\ (\bibinfo  {publisher} {CRC Press, Boca Raton},\ \bibinfo {year} {2022})\BibitemShut {NoStop}%
\bibitem [{\citenamefont {Zeier}\ and\ \citenamefont {Schulte-Herbrüggen}(2011)}]{Zeier_Schulte-Herbrueggen_2011}%
  \BibitemOpen
  \bibfield  {author} {\bibinfo {author} {\bibfnamefont {R.}~\bibnamefont {Zeier}}\ and\ \bibinfo {author} {\bibfnamefont {T.}~\bibnamefont {Schulte-Herbrüggen}},\ }\bibfield  {title} {\bibinfo {title} {Symmetry principles in quantum systems theory},\ }\href {https://doi.org/10.1063/1.3657939} {\bibfield  {journal} {\bibinfo  {journal} {J. Math. Phys.}\ }\textbf {\bibinfo {volume} {52}},\ \bibinfo {pages} {113510} (\bibinfo {year} {2011})}\BibitemShut {NoStop}%
\bibitem [{\citenamefont {Zeier}\ and\ \citenamefont {Zimbor{\'a}s}(2015)}]{zeier2015squares}%
  \BibitemOpen
  \bibfield  {author} {\bibinfo {author} {\bibfnamefont {R.}~\bibnamefont {Zeier}}\ and\ \bibinfo {author} {\bibfnamefont {Z.}~\bibnamefont {Zimbor{\'a}s}},\ }\bibfield  {title} {\bibinfo {title} {On squares of representations of compact {Lie} algebras},\ }\href {https://doi.org/10.1063/1.4928410} {\bibfield  {journal} {\bibinfo  {journal} {J. Math. Phys.}\ }\textbf {\bibinfo {volume} {56}},\ \bibinfo {pages} {081702} (\bibinfo {year} {2015})}\BibitemShut {NoStop}%
\bibitem [{\citenamefont {Zimbor{\'a}s}\ \emph {et~al.}(2015)\citenamefont {Zimbor{\'a}s}, \citenamefont {Zeier}, \citenamefont {Schulte-Herbr{\"u}ggen},\ and\ \citenamefont {Burgarth}}]{zimboras2015symmetry}%
  \BibitemOpen
  \bibfield  {author} {\bibinfo {author} {\bibfnamefont {Z.}~\bibnamefont {Zimbor{\'a}s}}, \bibinfo {author} {\bibfnamefont {R.}~\bibnamefont {Zeier}}, \bibinfo {author} {\bibfnamefont {T.}~\bibnamefont {Schulte-Herbr{\"u}ggen}},\ and\ \bibinfo {author} {\bibfnamefont {D.}~\bibnamefont {Burgarth}},\ }\bibfield  {title} {\bibinfo {title} {Symmetry criteria for quantum simulability of effective interactions},\ }\href {https://doi.org/10.1103/PhysRevA.92.042309} {\bibfield  {journal} {\bibinfo  {journal} {Phys. Rev. A}\ }\textbf {\bibinfo {volume} {92}},\ \bibinfo {pages} {042309} (\bibinfo {year} {2015})}\BibitemShut {NoStop}%
\bibitem [{\citenamefont {Zimbor{\'a}s}\ \emph {et~al.}(2014)\citenamefont {Zimbor{\'a}s}, \citenamefont {Zeier}, \citenamefont {Keyl},\ and\ \citenamefont {Schulte-Herbr{\"u}ggen}}]{ZZKS14}%
  \BibitemOpen
  \bibfield  {author} {\bibinfo {author} {\bibfnamefont {Z.}~\bibnamefont {Zimbor{\'a}s}}, \bibinfo {author} {\bibfnamefont {R.}~\bibnamefont {Zeier}}, \bibinfo {author} {\bibfnamefont {M.}~\bibnamefont {Keyl}},\ and\ \bibinfo {author} {\bibfnamefont {T.}~\bibnamefont {Schulte-Herbr{\"u}ggen}},\ }\bibfield  {title} {\bibinfo {title} {A dynamic systems approach to fermions and their relation to spins},\ }\href {https://doi.org/10.1140/epjqt11} {\bibfield  {journal} {\bibinfo  {journal} {EPJ Quantum Technol.}\ }\textbf {\bibinfo {volume} {1}},\ \bibinfo {pages} {11} (\bibinfo {year} {2014})}\BibitemShut {NoStop}%
\bibitem [{\citenamefont {Kazi}\ \emph {et~al.}(2025)\citenamefont {Kazi}, \citenamefont {Larocca}, \citenamefont {Farinati}, \citenamefont {Coles}, \citenamefont {Cerezo},\ and\ \citenamefont {Zeier}}]{Kazi_Larocca_Farinati_Coles_Cerezo_Zeier_2025}%
  \BibitemOpen
  \bibfield  {author} {\bibinfo {author} {\bibfnamefont {S.}~\bibnamefont {Kazi}}, \bibinfo {author} {\bibfnamefont {M.}~\bibnamefont {Larocca}}, \bibinfo {author} {\bibfnamefont {M.}~\bibnamefont {Farinati}}, \bibinfo {author} {\bibfnamefont {P.~J.}\ \bibnamefont {Coles}}, \bibinfo {author} {\bibfnamefont {M.}~\bibnamefont {Cerezo}},\ and\ \bibinfo {author} {\bibfnamefont {R.}~\bibnamefont {Zeier}},\ }\bibfield  {title} {\bibinfo {title} {Analyzing the quantum approximate optimization algorithm: Ans\"atze, symmetries, and {Lie} algebras},\ }\href {https://doi.org/10.1103/yfwq-yqmk} {\bibfield  {journal} {\bibinfo  {journal} {PRX Quantum}\ }\textbf {\bibinfo {volume} {6}},\ \bibinfo {pages} {040345} (\bibinfo {year} {2025})}\BibitemShut {NoStop}%
\bibitem [{\citenamefont {Singh}\ \emph {et~al.}(2025)\citenamefont {Singh}, \citenamefont {Kruckenhauser}, \citenamefont {van Bijnen},\ and\ \citenamefont {Zeier}}]{singh2025groundstate}%
  \BibitemOpen
  \bibfield  {author} {\bibinfo {author} {\bibfnamefont {J.}~\bibnamefont {Singh}}, \bibinfo {author} {\bibfnamefont {A.}~\bibnamefont {Kruckenhauser}}, \bibinfo {author} {\bibfnamefont {R.}~\bibnamefont {van Bijnen}},\ and\ \bibinfo {author} {\bibfnamefont {R.}~\bibnamefont {Zeier}},\ }\href@noop {} {\bibinfo {title} {Ground-state reachability for variational quantum eigensolvers: a {Rydberg}-atom case study}} (\bibinfo {year} {2025}),\ \Eprint {https://arxiv.org/abs/2506.22387} {arXiv:2506.22387} \BibitemShut {NoStop}%
\bibitem [{\citenamefont {Hu}\ \emph {et~al.}(2025)\citenamefont {Hu}, \citenamefont {Gomez}, \citenamefont {Chen}, \citenamefont {Trowbridge}, \citenamefont {Goldschmidt}, \citenamefont {Manchester}, \citenamefont {Chong}, \citenamefont {Jaffe},\ and\ \citenamefont {Yelin}}]{hu2025universal}%
  \BibitemOpen
  \bibfield  {author} {\bibinfo {author} {\bibfnamefont {H.-Y.}\ \bibnamefont {Hu}}, \bibinfo {author} {\bibfnamefont {A.~M.}\ \bibnamefont {Gomez}}, \bibinfo {author} {\bibfnamefont {L.}~\bibnamefont {Chen}}, \bibinfo {author} {\bibfnamefont {A.}~\bibnamefont {Trowbridge}}, \bibinfo {author} {\bibfnamefont {A.~J.}\ \bibnamefont {Goldschmidt}}, \bibinfo {author} {\bibfnamefont {Z.}~\bibnamefont {Manchester}}, \bibinfo {author} {\bibfnamefont {F.~T.}\ \bibnamefont {Chong}}, \bibinfo {author} {\bibfnamefont {A.}~\bibnamefont {Jaffe}},\ and\ \bibinfo {author} {\bibfnamefont {S.~F.}\ \bibnamefont {Yelin}},\ }\href@noop {} {\bibinfo {title} {Universal dynamics with globally controlled analog quantum simulators}} (\bibinfo {year} {2025}),\ \Eprint {https://arxiv.org/abs/2508.19075} {arXiv:2508.19075} \BibitemShut {NoStop}%
\bibitem [{\citenamefont {Bosma}\ \emph {et~al.}(1997)\citenamefont {Bosma}, \citenamefont {Cannon},\ and\ \citenamefont {Playoust}}]{MR1484478}%
  \BibitemOpen
  \bibfield  {author} {\bibinfo {author} {\bibfnamefont {W.}~\bibnamefont {Bosma}}, \bibinfo {author} {\bibfnamefont {J.}~\bibnamefont {Cannon}},\ and\ \bibinfo {author} {\bibfnamefont {C.}~\bibnamefont {Playoust}},\ }\bibfield  {title} {\bibinfo {title} {The {M}agma algebra system. {I}. {T}he user language},\ }\href {https://doi.org/10.1006/jsco.1996.0125} {\bibfield  {journal} {\bibinfo  {journal} {J. Symb. Comput.}\ }\textbf {\bibinfo {volume} {24}},\ \bibinfo {pages} {235} (\bibinfo {year} {1997})},\ \bibinfo {note} {computational algebra and number theory (London, 1993)}\BibitemShut {NoStop}%
\bibitem [{\citenamefont {McKay}\ and\ \citenamefont {Piperno}(2014)}]{McKay2014}%
  \BibitemOpen
  \bibfield  {author} {\bibinfo {author} {\bibfnamefont {B.~D.}\ \bibnamefont {McKay}}\ and\ \bibinfo {author} {\bibfnamefont {A.}~\bibnamefont {Piperno}},\ }\bibfield  {title} {\bibinfo {title} {Practical graph isomorphism, {II}},\ }\href {https://doi.org/10.1016/j.jsc.2013.09.003} {\bibfield  {journal} {\bibinfo  {journal} {J. Symb. Comput.}\ }\textbf {\bibinfo {volume} {60}},\ \bibinfo {pages} {94} (\bibinfo {year} {2014})}\BibitemShut {NoStop}%
\bibitem [{\citenamefont {Moudgalya}\ \emph {et~al.}(2022)\citenamefont {Moudgalya}, \citenamefont {Bernevig},\ and\ \citenamefont {Regnault}}]{Moudgalya_Bernevig_Regnault_2022}%
  \BibitemOpen
  \bibfield  {author} {\bibinfo {author} {\bibfnamefont {S.}~\bibnamefont {Moudgalya}}, \bibinfo {author} {\bibfnamefont {B.~A.}\ \bibnamefont {Bernevig}},\ and\ \bibinfo {author} {\bibfnamefont {N.}~\bibnamefont {Regnault}},\ }\bibfield  {title} {\bibinfo {title} {Quantum many-body scars and {Hilbert} space fragmentation: a review of exact results},\ }\href {https://doi.org/10.1088/1361-6633/ac73a0} {\bibfield  {journal} {\bibinfo  {journal} {Rep. Prog. Phys.}\ }\textbf {\bibinfo {volume} {85}},\ \bibinfo {pages} {086501} (\bibinfo {year} {2022})}\BibitemShut {NoStop}%
\bibitem [{\citenamefont {Farhi}\ \emph {et~al.}(2014)\citenamefont {Farhi}, \citenamefont {Goldstone},\ and\ \citenamefont {Gutmann}}]{Farhi_Goldstone_Gutmann_2014}%
  \BibitemOpen
  \bibfield  {author} {\bibinfo {author} {\bibfnamefont {E.}~\bibnamefont {Farhi}}, \bibinfo {author} {\bibfnamefont {J.}~\bibnamefont {Goldstone}},\ and\ \bibinfo {author} {\bibfnamefont {S.}~\bibnamefont {Gutmann}},\ }\href@noop {} {\bibinfo {title} {A quantum approximate optimization algorithm}} (\bibinfo {year} {2014}),\ \Eprint {https://arxiv.org/abs/1411.4028} {arXiv:1411.4028} \BibitemShut {NoStop}%
\bibitem [{\citenamefont {Mao}\ \emph {et~al.}(2025)\citenamefont {Mao}, \citenamefont {Yuan}, \citenamefont {Allcock},\ and\ \citenamefont {Zhang}}]{Mao_Yuan_Allcock_Zhang_2025}%
  \BibitemOpen
  \bibfield  {author} {\bibinfo {author} {\bibfnamefont {R.}~\bibnamefont {Mao}}, \bibinfo {author} {\bibfnamefont {P.}~\bibnamefont {Yuan}}, \bibinfo {author} {\bibfnamefont {J.}~\bibnamefont {Allcock}},\ and\ \bibinfo {author} {\bibfnamefont {S.}~\bibnamefont {Zhang}},\ }\href@noop {} {\bibinfo {title} {{QAOA-MaxCut} has barren plateaus for almost all graphs}} (\bibinfo {year} {2025}),\ \Eprint {https://arxiv.org/abs/2512.24577} {arXiv:2512.24577} \BibitemShut {NoStop}%
\end{thebibliography}%

\end{document}